\documentclass{rspublic}
\usepackage{amsfonts,amssymb}
\usepackage{epsfig}
\usepackage{url}

\setlength{\arraycolsep}{2pt}

\begin{document}

\title[ Riemann Hypothesis for Dirichlet $L$ Functions]{The Riemann Hypothesis for Dirichlet $L$ Functions}
\author[R.C. McPhedran ]{Ross C. McPhedran$^{1,2}$
}

\affiliation{$^1$CUDOS, School of Physics, University of Sydney, NSW 2006, Australia\\
$^2$ Department of Mathematical Sciences, University of Liverpool, Liverpool L69 7ZL, United Kingdom.
}
\label{firstpage}

\maketitle
\begin{abstract}{Lattice sums, Riemann hypothesis, Dirichlet $L$ functions, distribution functions of zeros}
This paper studies the connections between the zeros and their distribution functions for two particular Dirichlet $L$ functions: the Riemann zeta function, and the Catalan beta function, also known as the Dirichlet beta function.
It is shown that the Riemann hypothesis holds for the Dirichlet beta function $L_{-4}(s)$ if and only if it holds for $\zeta (s)$- a particular case of the Generalized Riemann Hypothesis.
\end{abstract}
\section{Introduction}
This paper continues  the investigations into the Riemann hypothesis for sums related to the Riemann zeta function. In a previous paper ( McPhedran et al 2011, hereafter referred to as I)
it was established that the Riemann hypothesis held for all of a particular class of two-dimensional sums over the square lattice if and only if it held for the lowest member of that class. That lowest
member, denoted ${\cal C}(0,1;s)$ was known from  the work of Lorenz (1871) and Hardy (1920)
to be given by the product of the Riemann zeta function $\zeta (s)$, and the particular Dirichlet $L$  or beta function, $L_{-4}(s)$ (Zucker and Robertson, 1976). 
In I it was also established that, if the complex variable $s$ is written as
$\sigma+ i t$, and all zeros of ${\cal C}(0,1;s)$ lie on $\sigma=1/2$ by a generalisation of the Riemann hypothesis, then the whole class of sums has the same distribution function for zeros,
as far as all terms tending to infinity with $t$ are concerned.  In another paper (McPhedran \& Poulton, 2014,
hereafter referred to as II) it was established that an antisymmetric and a symmetric combination of two Riemann zeta functions obeyed the Riemann hypothesis, with the result for the antisymmetric combination being an alternative proof to one given by Taylor (1945).
Numerical evidence was given that the two combinations of zeta functions both have the same distribution functions of zeros on  the critical line $\sigma=1/2$, and that this is also the same as that of ${\cal C}(0,1;s)$ and $\zeta (2 s-1/2)$.

Here, we will investigate the connection between the distributions  and locations of zeros
of $\zeta (s)$, $L_{-4}(s)$ and $\zeta (2 s-1/2)$. We will do this by studying the properties of the
quotient function $\zeta(s) L_{-4}(s)/\zeta(2 s-1/2)$, denoted as $\Delta_5(s)$, in a similar way to the course followed in I. We will prove that the Riemann hypothesis holds for $L_{-4}(s)$ if and only if it holds for $\zeta (s)$- a particular case of the Generalized Riemann Hypothesis. We will also prove the corollary that, if the Riemann hypothesis holds for $\zeta(s)$ in the range $0<t<t_0$, then it also holds  for $L_{-4}(s)$ at least  in the range $0<t<t_0/2$.

\section{Properties of $\Delta_5(s)$}
\begin{theorem}
The analytic function $\Delta_5(s)$, defined by
\begin{equation}
\Delta_5(s)=\frac{\zeta(s) L_{-4}(s)}{\zeta(2 s-1/2)}
\label{defn}
\end{equation}
 obeys the functional equation
\begin{equation}
\Delta_5(s)={\cal F}_5(s) \Delta_5(1-s),
\label{del5-1}
\end{equation} 
where
\begin{equation}
{\cal F}_5(s) =\frac{\Gamma (1-s) \Gamma(s-\frac{1}{4})}{\Gamma (s) \Gamma(\frac{3}{4}-s)}.
\label{del5-2}
\end{equation}
\label{thm5-1}
It is monotonic decreasing along the real axis, apart from jumps at first-order poles. The poles
are at $\sigma=1,-3/4,-7/4,-11/4, \ldots$, and are separated by first order zeros at
$\sigma=3/4, -1,-2,-3, \ldots$. Its phase on the critical line (modulo $\pi$) is that of $\Gamma(1/2- it) \Gamma(1/4 +it)$, which is for $t$ not small well approximated by $-\pi/8-1/(32 t)$, modulo $\pi$.
\end{theorem}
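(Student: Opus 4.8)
I would assemble the three classical functional equations feeding into $\Delta_5$: the completed equation $\pi^{-s/2}\Gamma(s/2)\zeta(s)=\pi^{-(1-s)/2}\Gamma((1-s)/2)\zeta(1-s)$; the completed equation for the odd primitive character of conductor $4$ (root number $+1$), $(\pi/4)^{-(s+1)/2}\Gamma((s+1)/2)L_{-4}(s)=(\pi/4)^{-(2-s)/2}\Gamma((2-s)/2)L_{-4}(1-s)$; and $\zeta$'s own equation taken at argument $2s-1/2$, using that $s\mapsto 1-s$ carries $2s-1/2$ to $1-(2s-1/2)=3/2-2s$, so no rescaling of $\zeta$ is needed. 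Writing $\Delta_5(s)/\Delta_5(1-s)$ as the product of $\zeta(s)/\zeta(1-s)$, $L_{-4}(s)/L_{-4}(1-s)$ and $\zeta(3/2-2s)/\zeta(2s-1/2)$ and substituting these, all powers of $\pi$ cancel identically; the factor $2^{1-2s}$ left over by the conductor-$4$ gamma factor of $L_{-4}$ is exactly cancelled by the $2$-power released when the half-integer gamma products $\Gamma(s/2)\Gamma((s+1)/2)$ and $\Gamma((1-s)/2)\Gamma(1-s/2)$ are collapsed by Legendre's duplication formula; and what survives is precisely ${\cal F}_5(s)=\Gamma(1-s)\Gamma(s-1/4)/[\Gamma(s)\Gamma(3/4-s)]$. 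The step is routine, the only thing to watch being the tally of $2$- and $\pi$-powers.

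\textbf{Real poles and zeros, and their interlacing.} On the real axis the numerator supplies the simple pole of $\zeta$ at $\sigma=1$ and the trivial simple zeros of $\zeta$ at $-2,-4,\dots$ and of $L_{-4}$ at $-1,-3,\dots$, while the denominator $\zeta(2\sigma-1/2)$ supplies a simple pole at $\sigma=3/4$ --- hence a simple zero of $\Delta_5$ --- and trivial simple zeros at $\sigma=-3/4,-7/4,\dots$ --- hence simple poles of $\Delta_5$. Since $\zeta$ and $L_{-4}$ have no further real zeros and at each of these points exactly one of the six sources is active, with no cancellation, the real poles of $\Delta_5$ are precisely $\sigma=1,-3/4,-7/4,\dots$ and the real zeros precisely $\sigma=3/4,-1,-2,-3,\dots$, and they strictly interlace as asserted. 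Near $\sigma=1$ one has $\Delta_5(\sigma)\sim(\pi/4)/[\zeta(3/2)(\sigma-1)]$, so $\Delta_5\to+\infty$ from the right and $-\infty$ from the left; computing the sign of the residue at each pole in the same way shows every pole is approached through $+\infty$ on its right and $-\infty$ on its left, which is exactly the pattern forced by monotone decrease with a single zero in each inter-pole interval.

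\textbf{Monotone decrease.} On the rightmost interval $(1,\infty)$ I would pass to the logarithmic derivative, which for $\sigma>1$ equals
\[
\frac{\Delta_5'}{\Delta_5}(\sigma)=-\sum_{n\ge2}\frac{(1+\chi_{-4}(n))\Lambda(n)}{n^{\sigma}}+2\sum_{n\ge2}\frac{\Lambda(n)}{n^{2\sigma-1/2}},
\]
obtained from $\zeta'/\zeta(\sigma)+L_{-4}'/L_{-4}(\sigma)-2\,\zeta'/\zeta(2\sigma-1/2)$ by the von Mangoldt expansions. Grouping by primes, every $p\equiv1\ (\mathrm{mod}\ 4)$ contributes $-2(\ln p)\left[(p^{\sigma}-1)^{-1}-(p^{2\sigma-1/2}-1)^{-1}\right]<0$ for $\sigma>1/2$, whereas $p=2$ and the primes $p\equiv3\ (\mathrm{mod}\ 4)$ contribute only terms of size $O((\ln p)\,p^{1/2-2\sigma})$; since the negative block decays like $p^{-\sigma}$ while the competing terms decay like $p^{1/2-2\sigma}$, a uniform tail estimate yields $\Delta_5'/\Delta_5<0$ on $(1,\infty)$, and the pole at $\sigma=1$ forces $\Delta_5'/\Delta_5\to-\infty$ there. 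To pass below $\sigma=1/2$ I would differentiate the functional equation, $\Delta_5'(\sigma)={\cal F}_5'(\sigma)\,\Delta_5(1-\sigma)-{\cal F}_5(\sigma)\,\Delta_5'(1-\sigma)$, noting that $(\log{\cal F}_5)'(\sigma)=[\psi(3/4-\sigma)-\psi(1-\sigma)]+[\psi(\sigma-1/4)-\psi(\sigma)]$, so that monotonicity of the digamma function $\psi$ on the relevant intervals makes ${\cal F}_5$ monotone decreasing between consecutive poles; the poles of ${\cal F}_5$ outside the pole set of $\Delta_5$ are cancelled exactly by the trivial zeros of $\Delta_5(1-\sigma)$, and dually for zeros, so the decrease transfers interval by interval. \emph{The main obstacle is the strip $1/2<\sigma<1$} (and, by reflection, $0<\sigma<1/2$): there the series for $\zeta'/\zeta$ and $L_{-4}'/L_{-4}$ diverge and the partial sums of the grouped expression genuinely change sign ($p=5$ against $p=3$, $p=13$ against $p=7$ and $11$, and so on), so monotone decrease on that strip has to be forced either from explicit bounds on the analytic continuations or --- as in I --- from estimates on $\Delta_5$ along the two segments meeting at $\sigma=1/2$, together with the endpoint identity $\Delta_5'/\Delta_5(1/2)=\frac12\,{\cal F}_5'/{\cal F}_5(1/2)<0$ and the single guaranteed interior zero at $\sigma=3/4$.

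\textbf{Phase on the critical line.} On $\sigma=1/2$ put $s=1/2+it$, so $1-s=\bar s$; since $\Delta_5$ is a ratio of Dirichlet series with real coefficients, $\Delta_5(\bar s)=\overline{\Delta_5(s)}$, and the functional equation becomes $\Delta_5(1/2+it)={\cal F}_5(1/2+it)\,\overline{\Delta_5(1/2+it)}$. Here ${\cal F}_5(1/2+it)=\Gamma(1/2-it)\Gamma(1/4+it)/[\Gamma(1/2+it)\Gamma(1/4-it)]=\overline{W}/W$ with $W:=\Gamma(1/2+it)\Gamma(1/4-it)$, so $|{\cal F}_5(1/2+it)|=1$; writing $\Delta_5(1/2+it)=\varrho(t)e^{i\phi(t)}$ with $\varrho\ge0$ gives $e^{2i\phi}=\overline{W}/W$, hence $\phi(t)\equiv-\arg W\equiv\arg[\Gamma(1/2-it)\Gamma(1/4+it)]$ modulo $\pi$ at every $t$ where $\Delta_5$ is finite and nonzero on the line. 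Stirling's asymptotics then give $\arg\Gamma(1/4+it)=t\log t-t-\pi/8+1/(96t)+\cdots$ and $\arg\Gamma(1/2-it)=-t\log t+t-1/(24t)+\cdots$, whose sum is $-\pi/8-1/(32t)+\cdots$; this is the claimed approximation modulo $\pi$. Of the four parts, the functional equation and the phase computation are essentially mechanical once the right classical inputs are in place, the pole/zero bookkeeping is elementary, and the genuine difficulty is the monotone decrease across the critical strip, in particular on $1/2<\sigma<1$, where I expect the method of I to be re-used.
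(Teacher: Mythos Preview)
Your derivations of the functional equation, the real pole/zero pattern, and the critical-line phase are all correct and run parallel to the paper. For the functional equation the paper takes a marginally shorter path: it quotes the completed equation for the product ${\cal C}(0,1;s)=\zeta(s)L_{-4}(s)$ directly from reference~I, namely $\Gamma(s)\pi^{-s}{\cal C}(0,1;s)=\Gamma(1-s)\pi^{-(1-s)}{\cal C}(0,1;1-s)$, and divides by the corresponding equation for $\zeta(2s-1/2)$; this bypasses your Legendre-duplication step, but your from-first-principles assembly of the three individual functional equations is equally valid. For the phase, the paper first rewrites ${\cal F}_5(s)$ via the reflection formula as $\sin[\pi(3/4-s)]/\sin(\pi s)$ times a Stirling tail $1+1/(16s)+\cdots$, extracts the factor $e^{-i\pi/4}$ for $t>1$, and halves the argument; you instead apply Stirling to $\Gamma(1/4+it)$ and $\Gamma(1/2-it)$ separately and add --- same outcome. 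For the negative real poles and zeros the paper appeals to the functional equation and the gamma poles in ${\cal F}_5$, while you read them off the trivial zeros of the constituent functions; again equivalent.

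The substantive divergence is the monotone-decrease claim. The paper gives \emph{no} analytic proof: after locating the pole at $\sigma=1$ and the zero at $\sigma=3/4$ it simply tabulates the numerical residues at the first several poles and the linear coefficients at the first several zeros (all of consistent sign) and lets those stand as evidence. Your logarithmic-derivative argument for $\sigma>1$, the transfer via $(\log{\cal F}_5)'$ to $\sigma<0$, and your honest flagging of the strip $0<\sigma<1$ as the real obstacle all go well beyond what the paper supplies; the ``method of I'' you expect to be re-used is in fact not invoked for this point, and the paper leaves monotonicity across the critical strip resting on the numerics.
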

\begin{proof}
The functional equation for $\Delta_5(s)$ follows from that for ${\cal C}(0,1;s)$ (see I) and
that for the Riemann zeta function (Titchmarsh \& Heath-Brown, 1987):
\begin{equation}
\label{del5-3}
\frac{\Gamma (s) {\cal C}(0,1;s)}{\pi^s}=\frac{\Gamma (1-s) {\cal C}(0,1;1-s)}{\pi^{(1-s)}},~~
\frac{\Gamma (s-1/4) \zeta(2 s-1/2)}{\pi^{s-1/4}}=\frac{\Gamma (3/4-s) \zeta(3/2-2 s)}{\pi^{(3/4-s)}}.
\end{equation}

In $\sigma>1$, $\Delta_5(s)$ can be represented by its series obtained by direct summation,
of which the first few terms are
\begin{equation}
\Delta_5(s)= 1+\frac{1}{2^s}+\frac{(1-\sqrt{2})}{4^s}+\frac{2}{5^s}+\ldots.
\label{del5-4}
\end{equation}
The pole at $\sigma =1$ is that of the function $\zeta(s)$ in the numerator of
$\Delta_5(s)$, and the zero at $s=3/4$ is that of $\zeta (2 s-1/2)$ in the denominator.
The positions of other poles and zeros on the real axis then follow from the functional equation (\ref{del5-1}), together with the properties of the gamma functions in (\ref{del5-2}) . The residues
of $\Delta_5(s)$ at its first few poles are (1, 0.300645), (-3/4, 0.312673), (-7/4, 0.25505), (-11/4, 0.237821), (-15/4, 0.230136) and (-19/4, 0.22657). The coefficients of the linear term round its first few zeros are (3/4, -5.0378),  (-1,-5.7055), (-2, -4.9245), (-3, -4.645) and (-4, -4.51975).

We next expand the expression for ${\cal F}_5(s)$ using Stirling's formula, assuming $|s|$ large:
\begin{equation}
{\cal F}_5(s)=\frac{\sin[\pi (3/4-s)]}{\sin (\pi s)} \left(1+\frac{1}{16 s}+\frac{17}{512 s^2} +\ldots \right).
\label{del5-5}
\end{equation}
We expand the $\sin$ term in the numerator in (\ref{del5-5}), and then replace both $\sin$ and $\cos$ terms by their representations in terms of $\exp ( i\pi s)$ and $\exp (-i \pi s)$. Assuming $t>1$,
it is an accurate approximation to neglect terms in the former exponential in comparison with terms
in the latter exponential. This leads to the approximation in $t>1$
\begin{equation}
{\cal F}_5(s)\simeq e^{-i\pi/4} \left( 1+\frac{1}{16 s}+\frac{17}{512 s^2} +\ldots \right).
\label{del5-6}
\end{equation}
(In $t<-1$, the  factor $e^{-i\pi/4}$ is replaced by $e^{+i\pi/4}$.)

We then use the expression
\begin{equation}
2 \arg \Delta_5 (1/2+ it )= \arg {\cal F}_5(1/2+i t),~ {\rm modulo}~2 \pi
\label{del5-7}
\end{equation}
which follows from the functional equation (\ref{del5-1}), together with (\ref{del5-6}). This yields
\begin{equation}
\arg \Delta_5 (1/2+ it )\simeq  -\frac{\pi}{8}-\frac{1}{32 t}, ~ {\rm modulo}~\pi
\label{del5-8}
\end{equation}
in $ t>1$.
\end{proof}

\begin{figure}[h]
\includegraphics[width=3.0in]{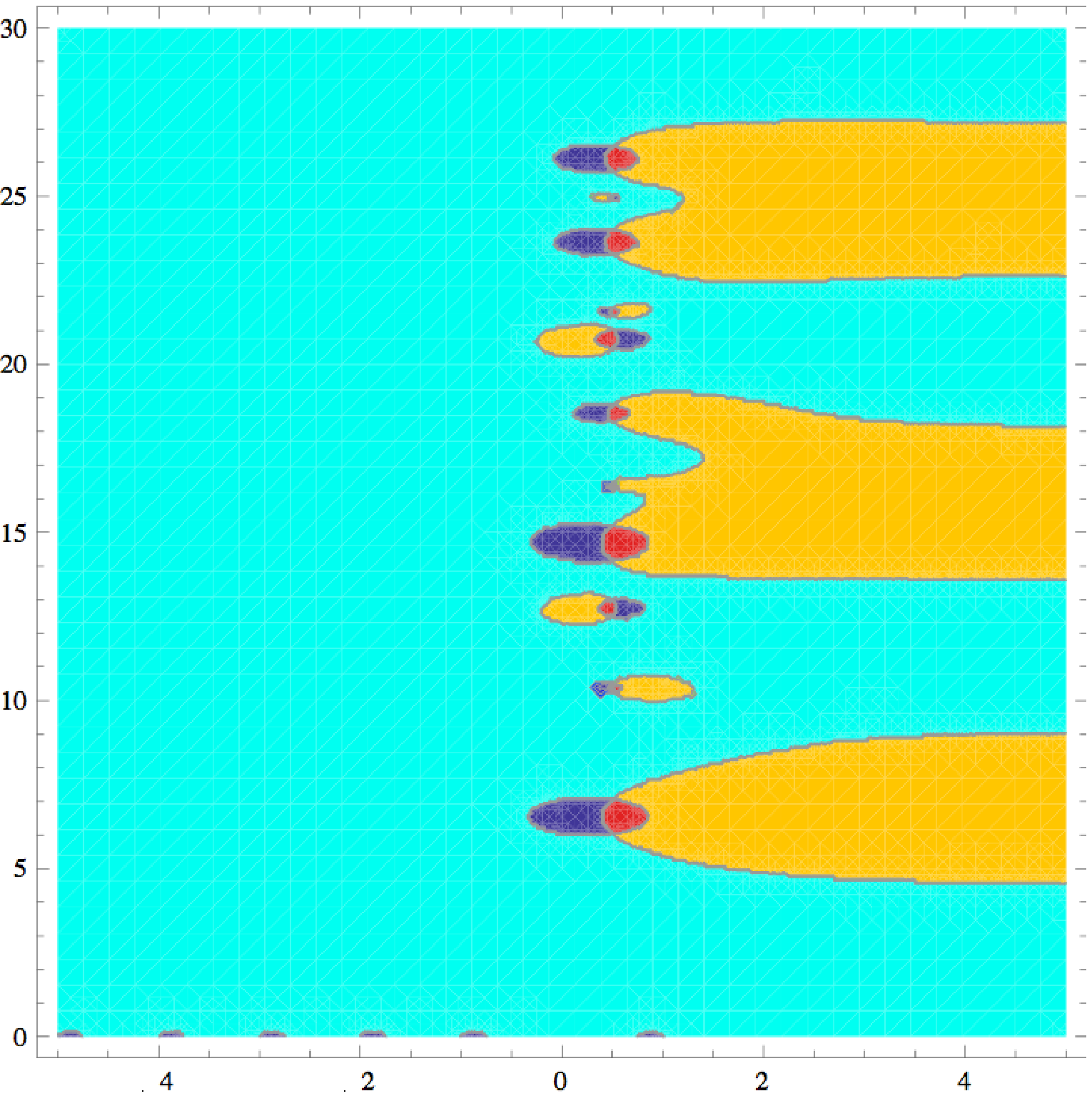}~~\includegraphics[width=3.0in]{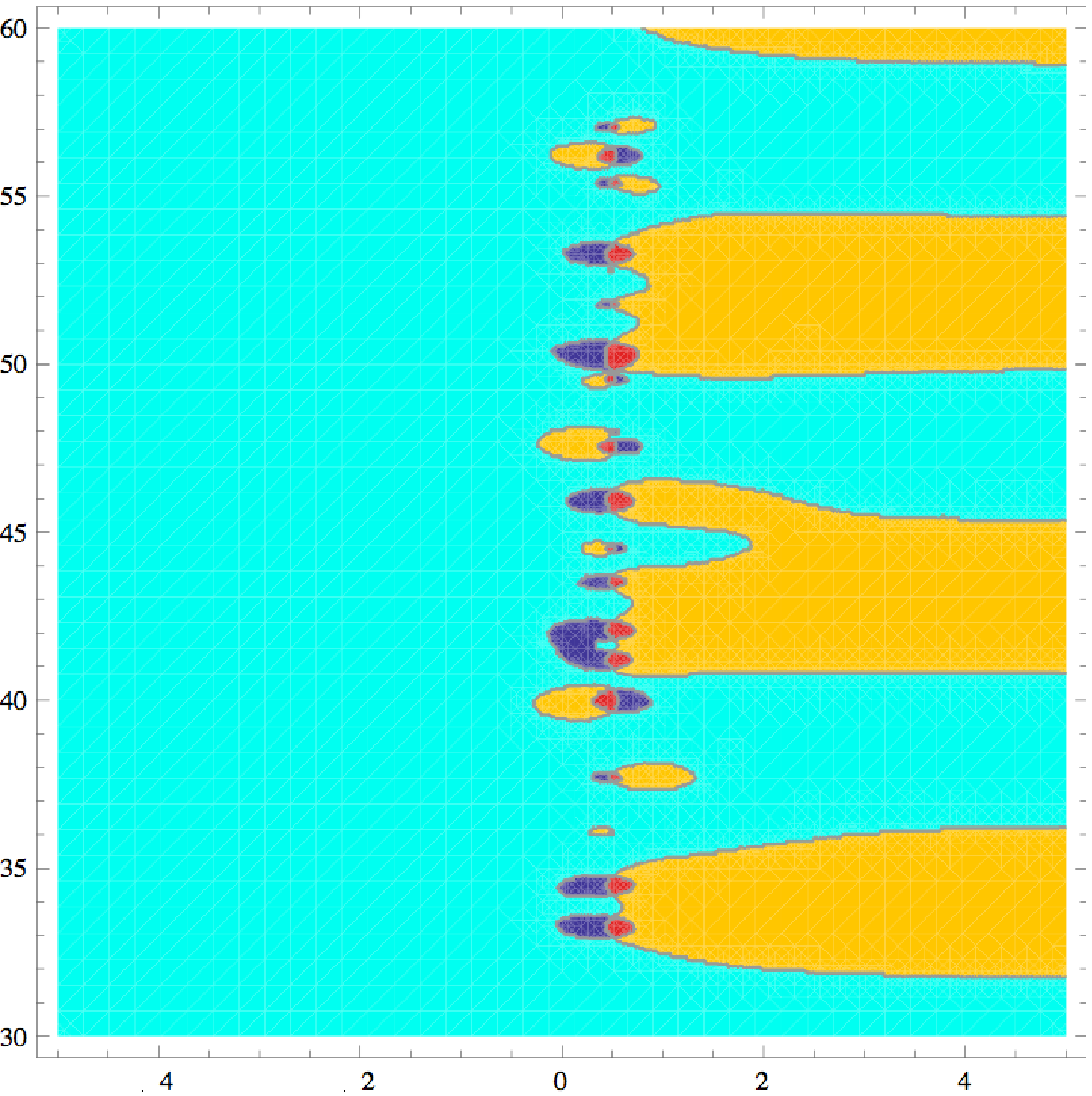}
\caption{Plots of the phase of $\Delta_5(s)$ in the complex plane. The phase is denoted by colour according to which
quadrant it lies in:  yellow-first quadrant,  red- second, purple- third, and light blue- fourth.}
\label{fig01}
 \end{figure}

\begin{figure}[h]
\includegraphics[width=3.0in]{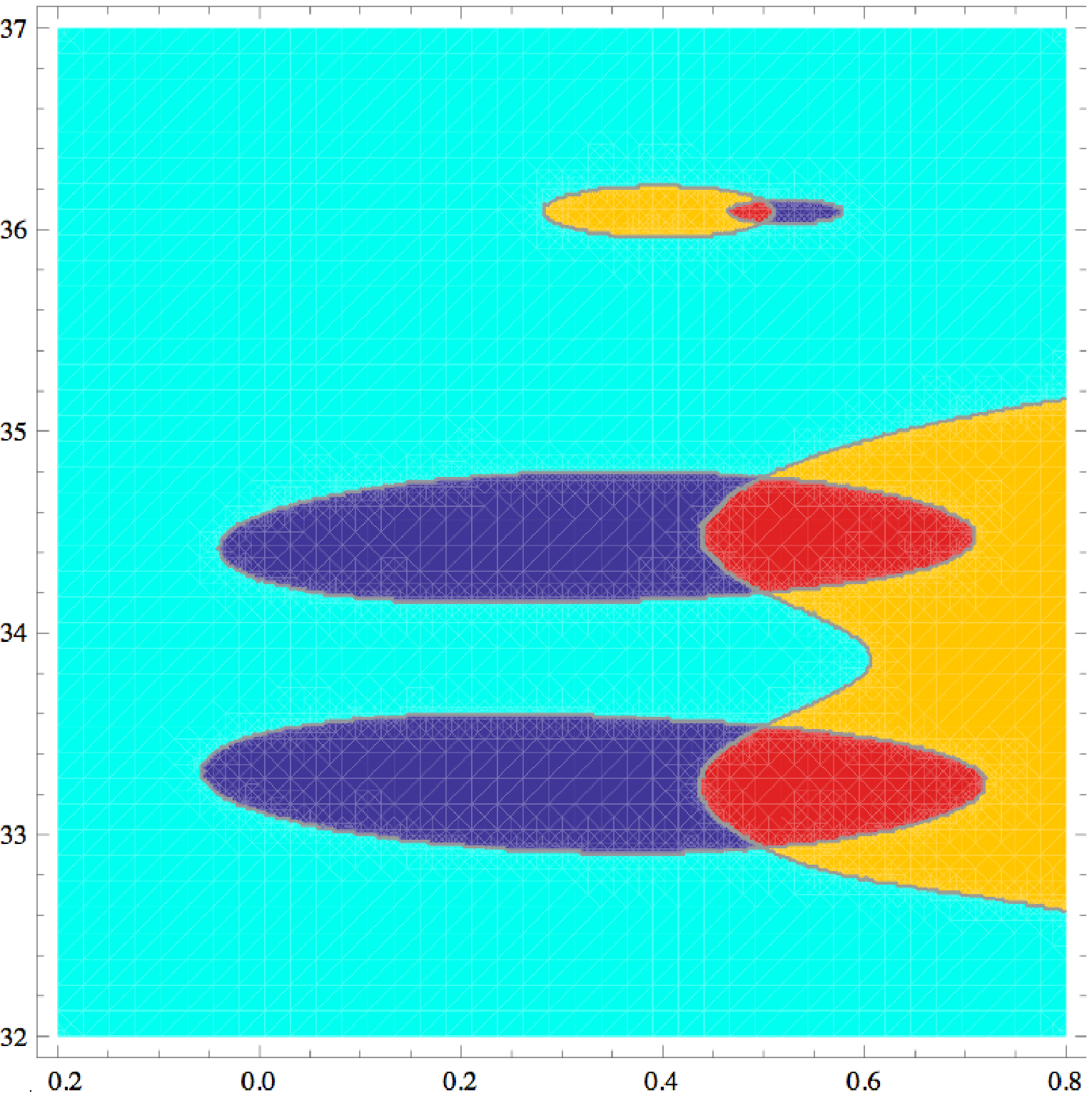}~~\includegraphics[width=3.0in]{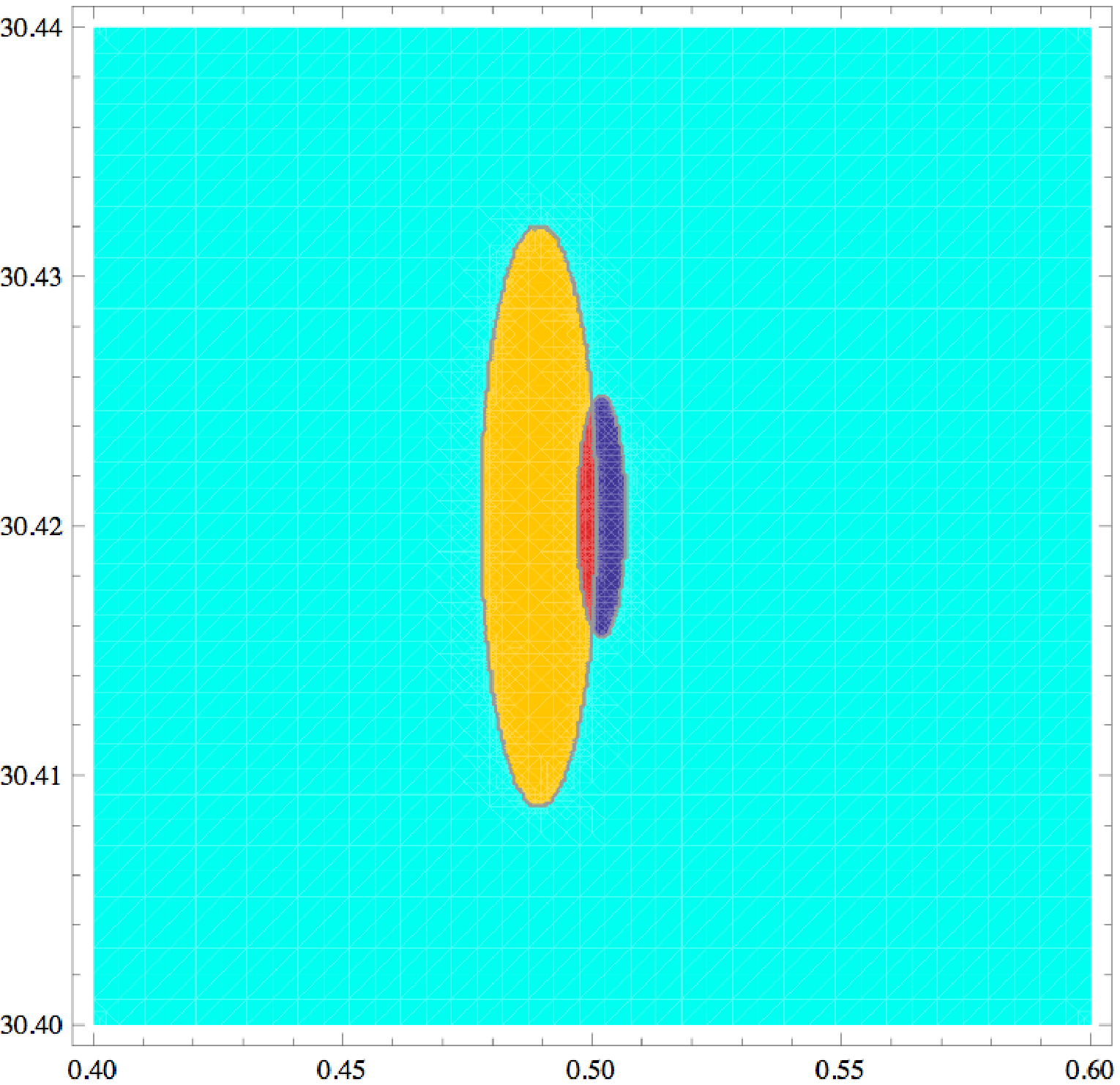}
\caption{ Further detail of the plots of the phase of $\Delta_5(s)$ in the complex plane
in Fig. \ref{fig01}. Zeros (Z) and poles (P) of  $\Delta_5(s)$ occur where four quadrants intersect,
and occur in the left panel in order of increasing $t$: Z, P, Z, P, P, Z and in the right panel P, Z.}
\label{fig02}
 \end{figure}

In Figs. \ref{fig01} and \ref{fig02} we show quadrant phase plots of $\Delta_5(s)$ in the range of
$t$ from zero to 60. The most striking features of these plots are provided by the first quadrant regions which extend rightwards from the region of $\sigma$ shown to  $\sigma=\infty$, and the
fourth quadrant regions which extend from $\sigma=-\infty$ to $\sigma=\infty$ . As we  have proved in Theorem \ref{thm5-1}, the critical line lies in either the fourth quadrant or the second quadrant,
and changes from one to the other at zeros or poles of $\Delta_5(s)$. The separation between the poles and zeros is a highly variable quantity, as exemplified by the example shown in Fig. \ref{fig02} (right panel), which is not resolved in the coarser scale plot of Fig. \ref{fig01} (right panel).

The phase plots of Figs. \ref{fig01} and \ref{fig02} are complemented by the amplitude plots of Fig \ref{fig03}. These have as their most prominent feature the boundaries between regions with
$|\Delta_5(s)|>1$ and $|\Delta_5(s)|<1$, which are approximately symmetric under reflection in the
critical line if $s>>1$, by (\ref{del5-1}) and (\ref{del5-6}). These amplitude boundary lines of course
must pass between poles and zeros, and intersect lines of constant phase orthogonally.

\begin{figure}[h]
\includegraphics[width=3.0in]{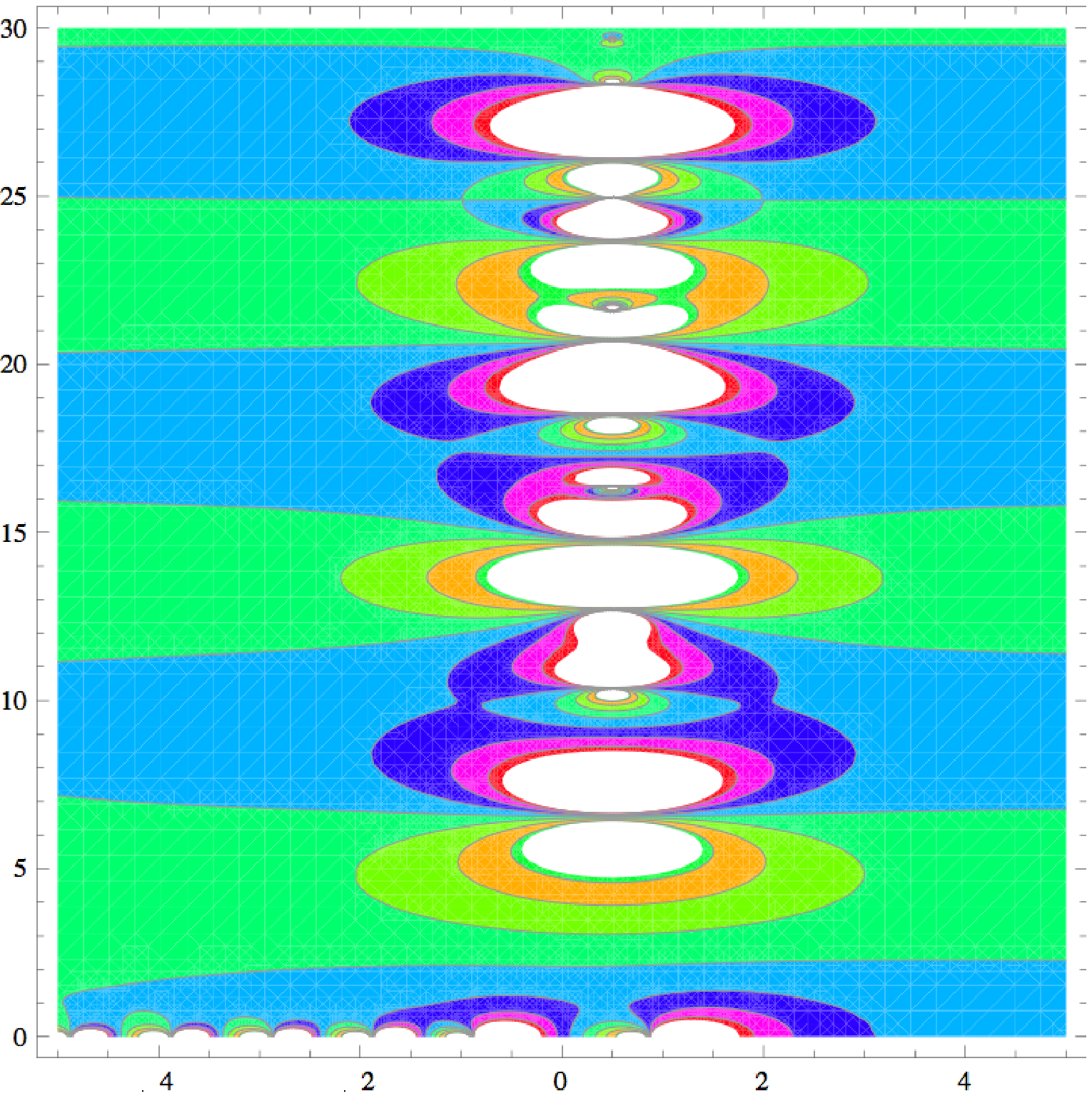}~~\includegraphics[width=3.0in]{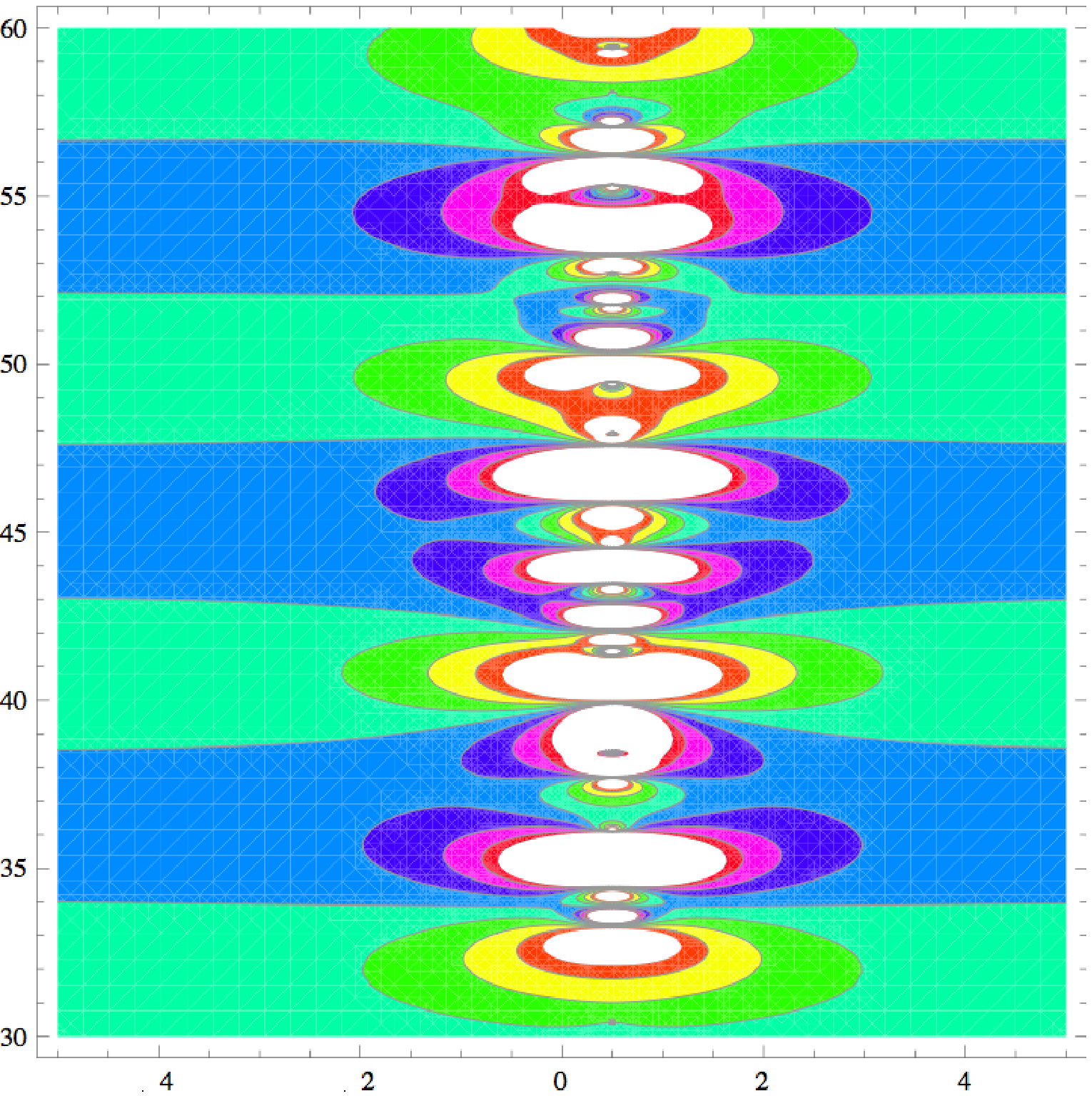}
\caption{\label{fig03}Plots of the amplitude of $\Delta_5(s)$ in the complex plane. The amplitude is denoted by colour, with the blue-toned regions containing contours of amplitude increasing from 1 towards a pole at their centre, and the green-toned regions containing contours of amplitude decreasing from 1 towards a zero at their centre.
.}
 \end{figure}
\begin{theorem}
The only lines of constant phase of $\Delta_5(s)$ which can attain $\sigma=\infty$ are
equally spaced, and have interspersed lines of constant modulus. All such lines of constant phase
reach the critical line at a pole or zero of  $\Delta_5(s)$.
\label{thm5-2}
\end{theorem}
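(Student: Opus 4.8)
The plan is to control $\Delta_5(s)$ for $\sigma$ large by means of its Dirichlet series (\ref{del5-4}). That series converges absolutely for $\sigma>1$ and has vanishing third coefficient, so $\Delta_5(s)=1+2^{-s}+O(4^{-\sigma})$ and hence $\log\Delta_5(s)=2^{-s}+O(4^{-\sigma})$, uniformly in $t$; thus $\arg\Delta_5(s)=-2^{-\sigma}\sin(t\log 2)+O(4^{-\sigma})$ and $\log|\Delta_5(s)|=2^{-\sigma}\cos(t\log 2)+O(4^{-\sigma})$. Two consequences are immediate. First, $\Delta_5\to 1$ uniformly as $\sigma\to\infty$, so the only phase that a line of constant phase reaching $\sigma=\infty$ can carry is $0$ (mod $2\pi$). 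Second, fixing $\sigma_0$ large enough to dominate the error terms, the implicit function theorem gives, for each $\sigma>\sigma_0$ and each $n\in\mathbb{Z}$, exactly one solution $t=t_n(\sigma)$ of $\arg\Delta_5(\sigma+it)=0$ near $t=n\pi/\log 2$ (using that $\partial_t\arg\Delta_5$ does not vanish there), with $t_n(\sigma)\to n\pi/\log 2$ as $\sigma\to\infty$, and these exhaust the phase-$0$ set in $\sigma>\sigma_0$; the same argument applied to $\log|\Delta_5|$ places one line of constant modulus $1$ near each $t=(n+1/2)\pi/\log 2$. This establishes that the phase-$0$ lines reaching $\sigma=\infty$ are asymptotically equally spaced with gap $\pi/\log 2$ and that lines of constant modulus $1$ are interspersed between them.

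For the last assertion I would trace a phase-$0$ line $\gamma_n$ inward from the asymptote $t\approx n\pi/\log 2$. The structural input is that $\log\Delta_5$ is holomorphic off the zeros and poles of $\Delta_5$, so $\arg\Delta_5$ and $\log|\Delta_5|$ are conjugate harmonic; consequently $\log|\Delta_5|$ varies strictly monotonically along $\gamma_n$ away from the isolated critical points of $\Delta_5$, at which $\gamma_n$ merely crosses itself rather than terminating. A level curve of $\arg\Delta_5$ has no interior endpoints other than zeros or poles of $\Delta_5$, so $\gamma_n$ must end at such a point; and by Theorem \ref{thm5-1} the critical line carries phase congruent mod $\pi$ to that of $\Gamma(1/2-it)\Gamma(1/4+it)$, which for $t>1$ is approximately $-\pi/8-1/(32t)$ and in any case never $\equiv 0\pmod\pi$, so $\gamma_n$ cannot cross $\sigma=1/2$ at a regular point. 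It then remains to rule out a return of $\gamma_n$ to $\sigma=\infty$ and to show that its terminal zero or pole lies on $\sigma=1/2$ rather than strictly inside $1/2<\sigma<1$ (or on the real axis); granting this, the endpoint is a point of the critical line at which $\Delta_5$ fails to be holomorphic and nonvanishing, i.e.\ a zero or a pole, as claimed.

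I expect these last two points to be the real obstacle. Ruling out a return to $\sigma=\infty$ should follow from the monotonicity of $\log|\Delta_5|$ along $\gamma_n$ once the behaviour at critical points of $\Delta_5$ is handled --- e.g.\ by a Jordan-curve argument showing that a $\gamma_n$ joining two asymptotes would enclose, yet leave no room for, the phase-$0$ and modulus-$1$ lines issuing from the intervening heights; this exploits the orthogonality of the two families and the count of phase lines established above. Pinning the terminal zero or pole to $\sigma=1/2$ is where the arithmetic enters: a zero of $\zeta$ or of $L_{-4}$, or a zero of $\zeta(2s-1/2)$, with real part in $(1/2,1)$ would, by the reflection symmetry in (\ref{del5-1})--(\ref{del5-2}), be paired with a mirror zero or pole with real part in $(0,1/2)$, and one must show that the phase portrait near $\sigma=1/2$ --- in particular the alternation of the critical line between the second and fourth quadrants recorded in Theorem \ref{thm5-1} --- is incompatible with $\gamma_n$ stopping short of $\sigma=1/2$ (and, for $n\neq 0$, with $\gamma_n$ descending instead to a real-axis zero or pole such as $s=3/4$ or $s=-3/4$, where the count of available phase-$0$ rays can be exhausted using the symmetry $\Delta_5(\bar s)=\overline{\Delta_5(s)}$). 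This is exactly the step that ties the geometry of $\Delta_5$ to the Riemann hypothesis, and it will use the sign and monotonicity data for $\mathcal{F}_5$ from (\ref{del5-6}). Finally, the line $\gamma_0$ along the positive real axis is exceptional, terminating at the pole $\sigma=1$ and so to be recorded separately, and the regime of small $t$, where the approximation $-\pi/8-1/(32t)$ to the critical-line phase degrades, should be treated directly from the exact value $\frac{1}{2}\arg\mathcal{F}_5(1/2+it)$.
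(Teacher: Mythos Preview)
Your large-$\sigma$ analysis is essentially the paper's argument, done more carefully: the paper also expands $\Delta_5(\sigma+it)\simeq 1+2^{-\sigma}\cos(t\log 2)-i\,2^{-\sigma}\sin(t\log 2)$ and reads off phase-zero lines at $t=n\pi/\log 2$ with modulus-one lines interposed at $t=(n+\tfrac12)\pi/\log 2$. Your explicit $O(4^{-\sigma})$ error and implicit-function-theorem step are a genuine sharpening of the paper's ``leading term'' treatment, and your observation that the critical-line phase $\approx -\pi/8$ forbids a regular crossing is exactly the paper's reason why any intersection with $\sigma=1/2$ must be a zero or pole.

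Where you diverge is in forcing the lines to reach $\sigma=1/2$. The paper does not argue via a Jordan-curve count; instead it passes to $\sigma<1/2$ through the functional equation, obtaining the expansion (\ref{del5-11}) for $\Delta_5(1-\sigma+it)$, and then shows that the amplitude contours of the algebraic correction $1-(\sigma+it)/(16(\sigma^2+t^2))$ are circles that sit near $|A|=1$ only. This is argued to be incompatible with a full family of amplitude contours (of all sizes above and below $1$) curving up to $t=\infty$ near the critical line, which is how the paper frames the only alternative to reaching $\sigma=1/2$. For small $t$ the paper simply appeals to the numerical phase plots.

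Your second obstacle, however, is framed in a way that would make the argument circular. You propose to pin the terminal point of $\gamma_n$ to $\sigma=1/2$ by showing the phase portrait is incompatible with an off-critical-line zero or pole of $\zeta$, $L_{-4}$, or $\zeta(2s-1/2)$; but excluding such zeros is precisely the content of Theorem~\ref{thm5-3}, which \emph{uses} Theorem~\ref{thm5-2} as input. The theorem must therefore be proved without any such hypothesis. The correct picture is that a phase-zero level curve does not terminate at a zero or pole of order $k$ but branches there into $2k$ rays; what has to be shown is not that no off-line zero exists, but that the connected phase-zero set issuing from $t\approx n\pi/\log 2$ cannot have \emph{all} of its branches escape to $\sigma=\infty$ or to $t=\infty$. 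The paper's amplitude-contour argument is aimed at the escape to $t=\infty$; neither argument is fully rigorous about routing through hypothetical off-line zeros, but you should not let your fix depend on their absence.
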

\begin{proof}
The proposition is true for $t$ not large compared with unity on the basis of numerical evidence
(see Figs. \ref{fig01}-\ref{fig03}).

We begin by taking the leading terms in equation (\ref{del5-4}) for $\sigma>>1$:
\begin{equation}
\Delta_5(\sigma+ i t)\simeq 1+2^{-\sigma}\cos(t \ln 2)-i 2^{-\sigma}\sin(t \ln 2).
\label{del5-9}
\end{equation}
From (\ref{del5-9}) the leading terms in the expansions for the modulus and phase of
$\Delta_5(s)$ are
\begin{equation}
\arg[\Delta_5(\sigma+ i t)]\simeq -\frac{\sin( t\ln 2)}{2^{\sigma}},~~
|\Delta_5(\sigma+ i t)|\simeq 1+\frac{\cos( t\ln 2)}{2^{\sigma}}  .
\label{del5-10}
\end{equation}
Thus, the lines of constant  phase zero are those which can attain $\sigma=\infty$, and correspond to $t=n\pi/\ln 2$, for $n$ integral. In between these are lines of constant amplitude unity, which also can attain $\sigma=\infty$, and correspond to $t=(n+1/2)\pi/\ln 2$, for $n$ integral.

For large $\sigma$, the lines of phase zero mark the boundary between regions in which the
phase of $\Delta_5(s)$ lies in the fourth quadrant (on one side of the line) and the first quadrant (on the other side).
The amplitude of  $\Delta_5(s)$  in this region is given by $1+(-1)^n/2^{\sigma}$, and thus
decreases as $\sigma$ decreases for $n$ odd, and increases for $n$ even. The amplitude is a monotonic function of $\sigma$, and must continue to behave monotonically even when
$\sigma$ is insufficiently large for (\ref{del5-9}) and (\ref{del5-10}) to be accurate. Indeed, if say
the amplitude were decreasing, and then started to increase, the lines of constant amplitude forming around the central line of constant phase would have to form closed loops, not possible
unless zeros and poles of $\Delta_5(s)$ lay within the loops. A similar argument of course applies
to the case of increasing amplitude along the line of zero phase. Lines of phase zero cannot cut the critical line other than at a zero or pole of  $\Delta_5(s)$ by virtue of its phase  value
(\ref{del5-8}) on the critical line.

Turning now to the lines of constant amplitude of unity, on one side the amplitude exceeds unity, and on the other it  is less than unity. The phase of $\Delta_5(s)$ along these lines in the region of large $\sigma$ is given by $-(-1)^n /2^{\sigma}$, and so lies in the fourth quadrant for $n$ even, and in the first for $n$ odd. It increases monotonically in its magnitude as $\sigma$ decreases, and again this conclusion holds true even if $\sigma$ is insufficiently large for (\ref{del5-9}) and (\ref{del5-10}) to be accurate.  Lines of constant unit amplitude  must cut the critical line in between poles and zeros
of $\Delta_5(s)$ .

We next consider the properties of $\Delta_5(s)$ in $\sigma<1/2$. Combining (\ref{del5-6}) and (\ref{del5-9}) the functional equation for $\Delta_5(s)$ gives
\begin{equation}
\Delta_5(1-\sigma+i t)\simeq \left[1+2^{-\sigma} \cos(t\ln 2)+i 2^{-\sigma} \sin(t\ln 2)\right]e^{-i\pi/4}
 \left( 1-\frac{\sigma+i t}{16(\sigma^2+t^2)}+\ldots \right),
\label{del5-11}
\end{equation}
a good approximation in $\sigma>3$. Comparing (\ref{del5-11}) and (\ref{del5-9}), we see
that the modulus is even under the transition from $\Delta_5(\sigma+i t)$ to $\Delta_5(1-\sigma+i t)$.  For the phase, the imaginary $t$-dependent part has the opposite sign and there is a translation of $-\pi/4$,
resulting in  $\Delta_5(1-\sigma+i t)$ everywhere in $t>1$ and $\sigma>3$ having a phase in the fourth quadrant.

Consider finally the possibility that the lines of constant amplitude and phase coming from $\sigma=\infty$ ceased
for some value of $t$ to reach the critical line. For this to occur, the lines of constant phase and amplitude would have to curve upwards and run towards $t=\infty$, and they would have to do that
everywhere above the starting value of $t$. (Otherwise, they would cut subsequent regions of lines of constant phase and amplitude heading towards the critical line, requiring the presence of accumulation points of zeros and poles in the finite part of the plane, not in keeping with the properties of ${\cal C}(0,1;s)$ and $\zeta (2s -1/2)$.) However, such lines coming from $\sigma=\infty$ would require, from the functional equation  corresponding lines to exist mirrored about $\sigma=1/2$. Such lines on the right would have phases in the first and fourth quadrants, and on the left their phase values for $|s|>>1$ would lie in the interval $[-3\pi/4,\pi/4]$.
The lines of constant amplitude are symmetric under $s\rightarrow 1-s$ if $|s|>>1$. This means sets of lines of constant amplitude are required to start in $\sigma<1/2$ at high $t$ values and proceed as $\sigma$ decreases towards smaller values of $t$. However, from equation  (\ref{del5-11}), as $\sigma$ increases, the algebraic terms in the parentheses dominate the values of $|\Delta_5(1-\sigma+i t)|$, with the exponentially-decreasing terms having a much smaller effect. It is easy to show that the contours of constant amplitude $A$ of the term in parentheses in (\ref{del5-11}) are in fact circles
given by:
\begin{equation}
\left(\sigma -\frac{1}{16 (1-A^2)}\right)^2+t^2=\frac{A}{16|1-A^2|}.
\end{equation}
These circles lie in the required region $\sigma$ significantly larger than one only for amplitudes $A$ close to but less than unity. This means that the contours of constant amplitude of  $\Delta_5(1-\sigma+i t)$ which can proceed from $1-\sigma<1/2$ towards strongly negative values are limited to amplitudes close to but less than unity. This gives a contradiction, since the lines of constant amplitude curving up
towards the critical line correspond to amplitudes both larger and smaller than unity. In summary, lines of constant amplitude and phase curving up towards infinite $t$ near the critical line are ruled out by the asymptotic behaviour of $\Delta_5(1-\sigma+i t)$ for large $\sigma$.


\end{proof}
\begin{corollary}
The only lines of constant phase of $\Delta_5(s)$ which can attain $\sigma=-\infty$ are
equally spaced, and have interspersed lines of constant modulus. All such lines of constant phase
reach the critical line at a pole or zero of  $\Delta_5(s)$.
\label{corol5-4}
\end{corollary}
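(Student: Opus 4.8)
The plan is to obtain Corollary \ref{corol5-4} from Theorem \ref{thm5-2} by reflecting the $\sigma\to+\infty$ picture through the critical line using the functional equation (\ref{del5-1}). Since $\Delta_5$ has a real Dirichlet series, $\Delta_5(\bar s)=\overline{\Delta_5(s)}$, so (\ref{del5-1}) can be rewritten at $s=\sigma+it$ as
\begin{equation}
\Delta_5(\sigma+it)={\cal F}_5(\sigma+it)\,\overline{\Delta_5\big((1-\sigma)+it\big)},
\end{equation}
whence $\arg\Delta_5(\sigma+it)=\arg{\cal F}_5(\sigma+it)-\arg\Delta_5((1-\sigma)+it)$ and $|\Delta_5(\sigma+it)|=|{\cal F}_5(\sigma+it)|\,|\Delta_5((1-\sigma)+it)|$. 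The transformation $\sigma\mapsto1-\sigma$ leaves $t$ fixed and carries $\sigma\to-\infty$ to ${\rm Re}\to+\infty$, which is the regime governed by Theorem \ref{thm5-2}.

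The next step is to record that ${\cal F}_5$ is a tame, non-vanishing multiplier away from the real axis. By the Stirling expansion (\ref{del5-5})--(\ref{del5-6}), or equivalently after applying the reflection formula to the two $\Gamma$'s of negative real argument, in $t>1$ one has ${\cal F}_5(s)=e^{-i\pi/4}\big(1+1/(16s)+\ldots\big)$ (and $e^{+i\pi/4}$ in $t<-1$); in particular ${\cal F}_5$ is analytic and non-zero throughout $|t|>1$, and as $\sigma\to-\infty$ its argument and log-modulus remain within $O(1/|\sigma|)+O(e^{-2\pi|t|})$ of the constants $\mp\pi/4$ and $0$. Thus multiplication by ${\cal F}_5(s)$ merely shifts the phase of $\Delta_5(1-s)$ by the near-constant $\mp\pi/4$ and its log-modulus by the near-constant $0$; it creates no poles or zeros in $|t|>1$ and cannot create or destroy level sets that run out to $\sigma=-\infty$.

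Feeding this into the displayed identity and invoking Theorem \ref{thm5-2}, one concludes that the only lines of constant phase of $\Delta_5(s)$ that can attain $\sigma=-\infty$ are the reflections through $\sigma=1/2$ of the phase-zero lines of $\Delta_5$ that attain $\sigma=+\infty$: since the latter lie at the equally spaced heights $t=n\pi/\ln 2$ and $t$ is preserved by the reflection, so do the former; they carry phase approaching $\mp\pi/4$ in the upper and lower half planes; and between them run the reflections of the interspersed unit-modulus lines at $t=(n+1/2)\pi/\ln 2$, with modulus approaching $1$. The monotonicity of $|\Delta_5|$ along those lines that was used in Theorem \ref{thm5-2}, together with $|{\cal F}_5|\to1$, rules out closed loops, hence accumulation of poles and zeros in the finite plane, exactly as there. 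Finally, each such line reaches the critical line only at a pole or zero of $\Delta_5(s)$: by Theorem \ref{thm5-1} the phase of $\Delta_5$ on $\sigma=1/2$ equals $\arg[\Gamma(1/2-it)\Gamma(1/4+it)]$ modulo $\pi$, which is $\simeq-\pi/8-1/(32t)$ for $t$ not small and in any case stays bounded away from $\pm\pi/4$ modulo $\pi$, so a line carrying phase $\mp\pi/4$ cannot cross $\sigma=1/2$ at a point where $\Delta_5$ is analytic and non-zero.

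The main obstacle, as at the end of the proof of Theorem \ref{thm5-2}, is excluding the possibility that such a line curves up towards $|t|=\infty$ rather than reaching the critical line; this is more delicate here than in the clean single-scale regime $\sigma\to+\infty$, because near $\sigma=-\infty$ the phase and modulus of $\Delta_5$ mix an algebraic $O(1/|\sigma|)$ correction from the $\Gamma$-ratio in ${\cal F}_5$ with exponentially small oscillations (the $e^{2\pi i s}$ term in ${\cal F}_5$ and the $2^{\sigma-1}$ term in the series for $\Delta_5(1-s)$), so one must verify that the transported picture is not distorted. The cleanest way to dispose of this is again the reflection argument of Theorem \ref{thm5-2}: such runaway behaviour of a level line of $\Delta_5(\sigma+it)$ would force, through (\ref{del5-1}) and the near-constancy of ${\cal F}_5$, the corresponding runaway behaviour of a level line of $\Delta_5((1-\sigma)+it)$, which is ruled out there by the asymptotics of $\Delta_5(1-\sigma+i t)$ for large $\sigma$. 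In the strip $|t|\lesssim 1$, where ${\cal F}_5$ itself carries poles and zeros on the real axis and (\ref{del5-6}) degrades, one falls back on the numerical evidence of Figs. \ref{fig01}--\ref{fig03}, as in Theorem \ref{thm5-2}.
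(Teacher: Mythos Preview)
Your proposal is correct and follows essentially the same route as the paper: use the functional equation (\ref{del5-1}) together with the near-constancy (\ref{del5-6}) of ${\cal F}_5$ to transport the $\sigma\to+\infty$ picture established in Theorem~\ref{thm5-2} across the critical line, identifying the lines reaching $\sigma=-\infty$ as carrying phase $-\pi/4$. The paper's own proof is terser, simply pointing back to the discussion around equation (\ref{del5-11}), and adds one remark you do not make explicitly: that there is no $2\pi$-ambiguity in this phase value, since the fourth-quadrant region extends continuously from $\sigma\to-\infty$ to $\sigma\to+\infty$.
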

\begin{proof}
The proof of this proposition is contained in the discussion of Theorem \ref{thm5-2}, and specifically in the  discussion of equation (\ref{del5-11}), the functional equation (\ref{del5-1}) and
equation (\ref{del5-6}). The lines of constant phase referred to in fact correspond to the phase $-\pi/4$. There is no ambiguity (modulo $2\pi$) in this phase in $\sigma<<1/2$, where the phase is controlled by the last two terms in (\ref{del5-11}), and since the fourth quadrant region extends continuously from $\sigma\rightarrow-\infty$ to $\sigma\rightarrow +\infty$, the phase of $\Delta_5(s)$ is
similarly unique in $\sigma>1/2$.
\end{proof}
\section{The Riemann Hypothesis for $\zeta(s)$ and $L_{-4}(s)$}

Using Theorem \ref{thm5-2}, we are in a position to follow the argument of I, to prove the equivalent of its  Theorem 4.3.
\begin{theorem}
The Riemann zeta function obeys the Riemann hypothesis if and only if the Dirichlet beta function
obeys the Riemann hypothesis.
\label{thm5-3}
\end{theorem}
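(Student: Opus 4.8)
The plan is to prove the two implications separately, in each case reducing a hypothetical off-critical-line zero to a contradiction with the phase-line foliation of Theorem \ref{thm5-2} and Corollary \ref{corol5-4}, \emph{without} attempting to locate all zeros and poles of $\Delta_5(s)$ simultaneously. The starting point is the bookkeeping implicit in the definition (\ref{defn}). In the strip $0<\sigma<1$ the zeros of $\Delta_5$ are exactly the nontrivial zeros $\rho=\beta+i\gamma$ of $\zeta$ (at $s=\rho$), the nontrivial zeros $\rho'=\beta'+i\gamma'$ of $L_{-4}$ (at $s=\rho'$), and the single zero at $s=3/4$; its poles are the images of the nontrivial zeros of $\zeta$ under $2s-1/2=\rho$, namely $s=\frac{1}{4}+\frac{\rho}{2}=(\frac{\beta}{2}+\frac{1}{4})+i\frac{\gamma}{2}$, together with the pole at $s=1$. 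The decisive structural feature is the \emph{height doubling} carried by the denominator $\zeta(2s-1/2)$: a zero of $\zeta$ at height $\gamma$ produces a pole of $\Delta_5$ at half that height, $\gamma/2$, whose real part equals $1/2$ precisely when $\beta=1/2$.

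For the forward implication, assume the Riemann hypothesis for $\zeta$. Then every $s=\rho$ and every $s=\frac{1}{4}+\frac{\rho}{2}$ lies on $\sigma=1/2$, so all zeros and poles of $\Delta_5$ that arise from $\zeta$ are already on the critical line, and the only candidates for an off-line zero of $\Delta_5$ are the zeros of $L_{-4}$. Suppose $L_{-4}$ had a zero $\rho'$ with $\beta'\neq1/2$; choosing one of least height, and (by the functional equation for $L_{-4}$) with $\beta'>1/2$, we obtain a point at which $\Delta_5$ has a zero strictly in $\sigma>1/2$ while every zero and pole of lower height lies on the critical line. This is the situation excluded by Theorem \ref{thm5-2}: an off-line zero in $\sigma>1/2$ is a point from which lines of all constant phases emanate, in particular a line of phase zero; tracing such a line towards $\sigma=+\infty$ would produce a phase-zero line reaching infinity that terminates at a zero \emph{off} the critical line, contradicting the termination statement of Theorem \ref{thm5-2}. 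The alternative, that the emanating phase and unit-modulus lines close into loops, is ruled out by the monotonicity-of-modulus argument used in the proof of Theorem \ref{thm5-2}, a loop requiring enclosed zeros or poles that the clean region below does not supply. Hence $L_{-4}$ has no off-line zero.

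For the converse, assume the Riemann hypothesis for $L_{-4}$ and suppose $\zeta$ had an off-line zero; let $\rho_1=\beta_1+i\gamma_1$ be one of least height, taken with $\beta_1>1/2$. Below height $\gamma_1$ all zeros of $\zeta$ lie on the line, so every pole of $\Delta_5$ of height below $\gamma_1/2$ (coming from $\zeta$-zeros of height below $\gamma_1$) and every zero of $\Delta_5$ of height below $\gamma_1/2$ (from $\zeta$ and, by hypothesis, from $L_{-4}$) lies on the critical line. At height $\gamma_1/2$, however, the denominator contributes a pole of $\Delta_5$ at $s=\frac{1}{4}+\frac{\rho_1}{2}$ with real part $\frac{\beta_1}{2}+\frac{1}{4}>\frac{1}{2}$, an off-line pole in an otherwise clean region; it cannot be cancelled, since the off-line point has height $\gamma_1/2<\gamma_1$ and neither $\zeta$ (by minimality) nor $L_{-4}$ (by hypothesis) vanishes off the line there. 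Applying to this pole the identical phase-line argument of Theorem \ref{thm5-2} yields the same contradiction, so $\zeta$ has no off-line zero. Letting the clean region exhaust the plane gives the full equivalence, while stopping at a finite height $t_0$ for $\zeta$ leaves the argument valid up to height $t_0/2$ for $L_{-4}$, which is the stated corollary.

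I expect the main obstacle to be the rigorous conversion of the geometric picture ``an off-line zero or pole forces the phase lines to wrap around it'' into an airtight contradiction with Theorem \ref{thm5-2}. One must verify that a single off-line singularity, lying just above a region in which the foliation is clean, cannot be accommodated by any rearrangement of the constant-phase and constant-modulus lines: either one of its emanating phase-zero (respectively phase $-\pi/4$) lines must escape to $\sigma=\pm\infty$ and so violate the termination statements of Theorem \ref{thm5-2} and Corollary \ref{corol5-4}, or those lines must close into loops, which the monotonic growth of $|\Delta_5|$ forbids without enclosed zeros and poles. Establishing that these are the only two alternatives, and that each is genuinely obstructed, is the technical heart of the proof; the remainder is the elementary zero-and-pole accounting of (\ref{defn}) together with the height doubling built into $\zeta(2s-1/2)$.
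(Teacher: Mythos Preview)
Your proposal shares the two essential structural ingredients with the paper's proof: the height-doubling carried by the denominator $\zeta(2s-1/2)$ (an off-line zero of $\zeta$ at height $\gamma$ forces an off-line pole of $\Delta_5$ at height $\gamma/2$), and the reliance on the phase-zero lines of Theorem~\ref{thm5-2}. The converse direction in particular is argued essentially as in the paper.

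Where you diverge is in how the contradiction is extracted. The paper does \emph{not} trace phase lines emanating from a hypothetical off-line singularity and argue case-by-case that they can neither escape to infinity nor close into loops. Instead it applies the Argument Principle directly: take two phase-zero lines $L_1,L_2$ coming from $\sigma=+\infty$ (as furnished by Theorem~\ref{thm5-2}), close them with vertical segments to the right of the critical line, and observe that the total change of argument of $\Delta_5$ around the resulting contour $C$ is zero. Hence the number of zeros of $\Delta_5$ inside $C$ equals the number of poles. Since every pole in $\sigma>1/2$ arises from an off-line zero of $\zeta$, the forward implication is immediate (no off-line $\zeta$-zeros $\Rightarrow$ no poles in $C$ $\Rightarrow$ no zeros in $C$ $\Rightarrow$ no off-line $L_{-4}$-zeros between $L_1$ and $L_2$); the converse uses the same counting together with the height-doubling you already identified.

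The practical difference is that the step you flag as the ``main obstacle'' --- rigorously showing that an isolated off-line singularity cannot be accommodated by any rearrangement of the constant-phase and constant-modulus foliation --- is precisely the step the paper avoids. Your local tracing argument is morally a reformulation of the Argument Principle, but making it airtight requires controlling where each of the emanating phase-zero lines goes (it could terminate at another zero or pole, or run along the critical line, not only escape to $\pm\infty$ or loop). The paper's global contour argument delivers the zero-pole balance in one stroke without having to resolve those alternatives.
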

\begin{proof}
Consider lines $L_1$ and $L_2$  in $t>0$ along which the phase of $\Delta_5(s)$  is zero. Join these
lines with two lines  to the right of the critical line along which  $\sigma$ is constant. Then the change of argument of  $\Delta_5(s)$ around the closed contour  $C$ so formed is zero, so by the Argument Principle the number of poles inside the contour equals the number of zeros. Each non-trivial pole is formed by a zero of $\zeta(s)$, so if there are no such zeros within $C$ there can be no zeros of  ${\cal C}(0,1;s)$  and so of $L_{-4}(s)$ within $C$. 

Conversely,  suppose there are no zeros of $L_{-4} (s)$ off the critical line.  Suppose the
first zero of $\zeta(s)$ off the critical line occurs at $t=t_0$. Then there is a zero of $\zeta(2 s-1/2)$ off the critical line at $t=t_0/2$, and so a pole of $\Delta_5(s)$ off the critical line there, with no possible zero off the critical line in that neighbourhood of $t$.This contradicts the Argument Principle. 

These arguments prove the theorem in the region to the right of the critical line lying between lines of zero phase of $\Delta_5(s)$, with the result to the left of the critical line then guaranteed by the functional equation (\ref{del5-1}) .

To complete the proof we need to show that any point in the region $\sigma>1/2$, $t>0$ is enclosed between lines of phase zero of  $\Delta_5(s)$ coming from $\sigma=\infty$. We note that $t=0$ is one such line, and that for any $\sigma>1/2$ the infinite number of such constant phase lines cannot cluster into a finite interval of $t$, since that would indicate an essential singularity of  $\Delta_5(s)$ for that $\sigma$.
\end{proof}

\begin{corollary}
Suppose it is known that the Riemann hypothesis holds for $\zeta(s)$ in the range $t<t_0$.
Then the Riemann hypothesis holds for $L_{-4}(s)$ at least  in the range $t<t_0/2$.
\label{corol5-1}
\end{corollary}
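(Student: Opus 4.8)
The plan is to run, in localised form, the argument from the first part of the proof of Theorem~\ref{thm5-3}. The key is an exact dictionary between the zeros at play, read off from (\ref{defn}): an off-critical-line pole of $\Delta_5(s)$ at height $t$ is precisely a zero of $\zeta(2s-1/2)$ there, hence a zero of $\zeta(w)$ with $\mathrm{Re}(w)\neq 1/2$ at height $\mathrm{Im}(w)=2t$; while an off-critical-line zero of $\Delta_5(s)$ is either a zero of $L_{-4}(s)$ or a zero of $\zeta(s)$ at the same height. Therefore, under the hypothesis that $\zeta(s)$ has no off-critical-line zero with $t<t_0$, both $\zeta(s)$ and $\zeta(2s-1/2)$ are nonzero and finite throughout the region $\sigma>1/2$, $0<t<t_0/2$, so there $\Delta_5(s)$ has no poles and its zeros coincide exactly with the zeros of $L_{-4}(s)$.

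Suppose now, for contradiction, that $L_{-4}(s)$ has a zero off the critical line with $0<t<t_0/2$. Such a zero is a zero of $\Delta_5(s)$ by the preceding remark, and applying the functional equation (\ref{del5-1}) together with $\Delta_5(\bar s)=\overline{\Delta_5(s)}$ (the coefficients in (\ref{del5-4}) being real) we obtain a zero of $\Delta_5(s)$ at some $s_1=\sigma_1+it_1$ with $\sigma_1>1/2$ and $0<t_1<t_0/2$. Following the proof of Theorem~\ref{thm5-3}, enclose $s_1$ in the closed contour $C$ bounded on its two sides by the lines of zero phase of $\Delta_5(s)$ which emanate from $\sigma=+\infty$ immediately below and immediately above $s_1$ --- these exist, are equally spaced at $t=n\pi/\ln 2$, and each reaches the critical line by Theorem~\ref{thm5-2} --- and closed on the right by a segment of large constant $\sigma$, so that $C\subset\{\sigma>1/2\}$. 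The change of argument of $\Delta_5(s)$ around $C$ vanishes, exactly as in Theorem~\ref{thm5-3}, so by the Argument Principle $C$ encloses as many zeros as poles of $\Delta_5(s)$. Provided $C$ lies entirely in $t<t_0/2$, every pole of $\Delta_5(s)$ it encloses would be off the critical line with $\sigma>1/2$ and so, by the dictionary, would force a zero of $\zeta(s)$ off the critical line below height $t_0$, contrary to hypothesis; hence $C$ encloses no poles, so no zeros, contradicting $\Delta_5(s_1)=0$. This disposes of zeros with $\sigma_1>1/2$; the case $\sigma_1<1/2$ follows from (\ref{del5-1}) and $\sigma_1=1/2$ is vacuous.

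The step requiring genuine care --- and the one I expect to be the main obstacle --- is the qualifier ``provided $C$ lies entirely in $t<t_0/2$''. When $t_1$ is separated from $t_0/2$ by more than the line spacing $\pi/\ln 2$, one takes for the sides of $C$ the zero-phase lines emanating just below and just above $t_1$, both at heights below $t_0/2$, and it then suffices to know that the slab they bound with each other does not rise in $t$ above the higher of its two emanation heights; this should follow from the monotonicity of $|\Delta_5(s)|$ along such lines and the exclusion of closed modulus loops established inside the proof of Theorem~\ref{thm5-2}, but it must be checked. For the remaining band $t_0/2-\pi/\ln 2\le t_1<t_0/2$ a supplementary argument is needed: one would apply the Argument Principle to the part of the slab lying below $t=t_0/2$ --- still free of poles of $\Delta_5(s)$ by the first paragraph --- and control the extra change of argument contributed by the horizontal cut at $t=t_0/2$ using the phase value (\ref{del5-8}) on the critical line and the estimate (\ref{del5-10}) for large $\sigma$. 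Making that last estimate rigorous is where the real work lies; the remainder is bookkeeping on top of Theorems~\ref{thm5-2} and \ref{thm5-3}.
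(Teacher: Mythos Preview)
Your approach is essentially the paper's: both localise the Argument-Principle step of Theorem~\ref{thm5-3} to a slab between consecutive zero-phase lines of $\Delta_5(s)$ coming from $\sigma=+\infty$, using the dictionary that an off-axis pole of $\Delta_5$ at height $t$ is an off-axis zero of $\zeta$ at height $2t$ while an off-axis zero comes from $\zeta(s)L_{-4}(s)$. The paper phrases it forward --- the first off-axis zero of $\zeta$ at some $t_1>t_0$ produces a pole of $\Delta_5$ at $t_1/2$, which forces a companion off-axis zero of $L_{-4}$ between the enclosing pair of zero-phase lines, and ``there can be no previous off-axis zero between a lower pair of contours'' --- whereas you argue the contrapositive from a hypothetical $L_{-4}$ violation below $t_0/2$; the logic is the same. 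Your third paragraph isolates honestly the one delicate point (whether the slab containing $s_1$ can be taken to lie entirely below $t_0/2$, given that the zero-phase lines are located only asymptotically at $t=n\pi/\ln 2$), which the paper's three-sentence proof simply passes over; neither version resolves it, so your argument is at least as complete as the paper's, and more candid about where the residual difficulty sits.
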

\begin{proof}
We know the first off-axis zero of $\zeta(s)$ has to occur for some $t_1>t_0$. Then $\Delta_5(s)$
has to have a pole off-axis at $t=t_1/2$, and so it must have an off-axis zero between two enclosing contours
of zero phase of $\Delta_5(s)$ coming from $\sigma=\infty$. This must correspond to a zero
of $L_{-4}(s)$, and there can be no previous off-axis zero between a lower pair of contours of zero phase.
\end{proof}

Combining theorem \ref{thm5-3} with theorem 4.3 of I, we have immediately
\begin{corollary} If either $\zeta (s)$ or $L_{-4}(s)$ obeys the Riemann hypothesis, then so does the other, and all the two-dimensional sums ${\cal C}(1, 4 m;s)$ for non-negative integers  $m$.
\label{corol5-2}
\end{corollary}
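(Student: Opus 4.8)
The plan is to obtain the statement purely by composing equivalences already in hand, so that no fresh analytic input is required; the corollary is, as its preamble says, immediate once Theorem~\ref{thm5-3} and Theorem~4.3 of~I are placed side by side. The pivot is the classical Lorenz--Hardy--Zucker--Robertson identity recalled in the Introduction, ${\cal C}(0,1;s)=\zeta(s)\,L_{-4}(s)$, valid on the whole relevant strip by analytic continuation of the defining lattice sum. Since this is a genuine product and not a quotient, the set of non-trivial zeros of ${\cal C}(0,1;s)$, counted with multiplicity, is exactly the union of the non-trivial zeros of $\zeta(s)$ and those of $L_{-4}(s)$: a zero of either factor is automatically a zero of the product and no cancellation can remove it. Hence ${\cal C}(0,1;s)$ obeys the Riemann hypothesis if and only if both $\zeta(s)$ and $L_{-4}(s)$ do.

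Next I would bring in Theorem~\ref{thm5-3}, which asserts that $\zeta(s)$ obeys the Riemann hypothesis precisely when $L_{-4}(s)$ does. Feeding this into the previous paragraph collapses the conjunction ``both $\zeta$ and $L_{-4}$'' into either one of them: the three statements ``RH for $\zeta$'', ``RH for $L_{-4}$'' and ``RH for ${\cal C}(0,1;s)$'' become mutually equivalent.

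Finally I would invoke Theorem~4.3 of~I, which states that the Riemann hypothesis holds for every two-dimensional sum ${\cal C}(1,4m;s)$ with $m$ a non-negative integer if and only if it holds for the base member ${\cal C}(0,1;s)$ of the family. Appending this equivalence to the chain just obtained gives
\[
\text{RH for }\zeta \iff \text{RH for }L_{-4}\iff \text{RH for }{\cal C}(0,1;s)\iff \text{RH for all }{\cal C}(1,4m;s),
\]
which is exactly the assertion of the corollary; in particular the truth of the Riemann hypothesis for any one of $\zeta(s)$ or $L_{-4}(s)$ propagates both to the other and to the whole family of sums.

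Since the argument is a formal concatenation of established equivalences, there is no substantive ``hard step''. The only point that deserves a sentence of care is the reduction from the product ${\cal C}(0,1;s)$ to its two factors: one should confirm that the factorization holds throughout the region in question and that ``a zero off the critical line of ${\cal C}(0,1;s)$'' genuinely corresponds to such a zero of $\zeta(s)$ or of $L_{-4}(s)$ (with the trivial zeros on the real axis playing no role). Granting that, the corollary follows by combining Theorem~\ref{thm5-3} with Theorem~4.3 of~I exactly as indicated.
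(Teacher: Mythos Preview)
Your argument is correct and coincides with the paper's own approach: the corollary is stated there without proof, introduced simply as an immediate consequence of combining Theorem~\ref{thm5-3} with Theorem~4.3 of~I, and your write-up merely makes explicit the bridging role of the factorization ${\cal C}(0,1;s)=\zeta(s)L_{-4}(s)$.
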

\section{Distribution Functions for the Zeros of $\zeta(s)$ and $L_{-4}(s)$}
\begin{theorem}
Assuming the Riemann hypothesis holds for $\zeta(s)$ or $L_{-4}(s)$, then given any two lines of phase zero of $\Delta_5(s)$ running from $\sigma=\infty$ and intersecting the critical line, the number of zeros and poles of $\Delta_5(s)$ counted according to multiplicity and lying properly between the lines must be the same.
\label{thm5-4}
\end{theorem}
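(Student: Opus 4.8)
The plan is to apply the argument principle to a contour built from the two phase-zero lines together with a piece of the critical line, the Riemann hypothesis being used only to force every zero and pole of $\Delta_5(s)$ onto $\sigma=1/2$. Let $L_1,L_2$ be the given lines of phase zero coming in from $\sigma=\infty$ and meeting the critical line at $\tau_1,\tau_2$, each of which is a zero or a pole of $\Delta_5(s)$ by Theorem \ref{thm5-2}; label them so that ${\rm Im}\,\tau_1<{\rm Im}\,\tau_2$. Let $\Omega$ be the region bounded by $L_1$, $L_2$, the segment of $\sigma=1/2$ joining $\tau_1$ to $\tau_2$, and a vertical line $\sigma=R$ far to the right, with small circular indentations made at $\tau_1$ and $\tau_2$ so that these two boundary zeros/poles are excluded. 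Under the Riemann hypothesis, which by Theorem \ref{thm5-3} holds for $\zeta(s)$ and $L_{-4}(s)$ together, all non-trivial zeros and poles of $\Delta_5(s)$ --- those coming from $\zeta(s)$, $L_{-4}(s)$ and $\zeta(2s-1/2)$ --- lie on $\sigma=1/2$, so the interior of $\Omega$ contains none; the argument principle therefore says the net change of $\arg\Delta_5(s)$ around $\partial\Omega$ is zero.

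Next I would collect the boundary contributions. On $L_1$ and $L_2$ the phase of $\Delta_5(s)$ is identically zero, so they contribute nothing; on $\sigma=R$ the leading expansion (\ref{del5-10}) gives $\arg\Delta_5=O(2^{-R})$, hence a contribution tending to $0$, and since the running total is an integer multiple of $2\pi$ it is exactly $0$ once $R$ is large. So the increment of $\arg\Delta_5$ picked up along the critical-line segment, together with the increments of the two small corner arcs at $\tau_1$ and $\tau_2$, must sum to zero. Along $\sigma=1/2$, equation (\ref{del5-7}) lets me write $\arg\Delta_5(1/2+it)=\frac{1}{2}\arg{\cal F}_5(1/2+it)+\pi\,k(t)$, where ${\cal F}_5$ is analytic and nowhere zero on the critical line, so $\frac{1}{2}\arg{\cal F}_5$ is continuous and, by (\ref{del5-8}), stays within a bounded interval about $-\pi/8$ of length well below $\pi$, while $k(t)$ is an integer that, as $t$ increases, goes up by $1$ at each zero and down by $1$ at each pole of $\Delta_5(s)$, counted with multiplicity. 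Hence the critical-line segment contributes $-\pi(N_Z-N_P)$, with $N_Z,N_P$ the numbers of zeros and poles strictly between $\tau_1$ and $\tau_2$, plus a continuous remainder of modulus less than $\pi$.

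It then remains to evaluate the two corner arcs. Near $\tau_i$ one has $\Delta_5(s)\sim c_i(s-\tau_i)^{\pm1}$, and the angular opening of $\Omega$ at $\tau_i$ is pinned down by three facts: the phase of $\Delta_5(s)$ is zero on $L_i$, it is $-\pi/8$ modulo $\pi$ on the critical line at $\tau_i$ (Theorem \ref{thm5-1}), and $L_i$ reaches $\tau_i$ from the side $\sigma\ge1/2$. Doing this bookkeeping, the two corner increments and the continuous remainder add up to a quantity that is an integer multiple of $2\pi$ yet has modulus below $2\pi$, so it must be $0$; this forces $N_Z=N_P$, which is the claim. The interval of small $t$, where the asymptotic form of ${\cal F}_5$ used for (\ref{del5-8}) is not yet accurate, is covered by the numerical evidence of Figs. \ref{fig01}--\ref{fig03}, exactly as elsewhere in the paper.

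The step I expect to be the real obstacle is this evaluation of the corner arcs. Because $\tau_1$ and $\tau_2$ are themselves zeros or poles of $\Delta_5(s)$, the arcs around them are not negligible: each contributes a definite fraction of $\pi$, and which fraction it is depends on whether $\tau_i$ is a zero or a pole and on the parity of the phase-zero line $L_i$ (equivalently, on the sign of the amplitude change $\pm 2^{-\sigma}$ along $L_i$ in (\ref{del5-10}), which decides on which side of $L_i$ the strip $\Omega$ lies). One has to verify that the two corner terms and the continuous remainder together have modulus strictly below $2\pi$, so that integrality of the winding number pins it to zero; this leans on the geometry of Theorems \ref{thm5-1} and \ref{thm5-2} --- the phase being $-\pi/8$ modulo $\pi$ on the critical line, and the phase-zero lines being equally spaced with the strips between them alternating in type --- and the calculation shows along the way that two neighbouring such lines meet the critical line at one zero and one pole, which is precisely what makes the corner contributions cancel.
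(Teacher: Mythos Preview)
Your approach is essentially the same as the paper's: build a closed contour from the two phase-zero lines, a segment on $\sigma=R$ with $R$ large, and the segment of the critical line between the two intersection points; use the Riemann hypothesis (via Theorem~\ref{thm5-3}) to ensure the interior contains no zeros or poles of $\Delta_5(s)$, so the total change of argument vanishes; then read off $N_Z=N_P$ from the critical-line contribution.

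The one place where the paper is cleaner than your proposal is exactly the step you flag as the ``real obstacle'', the corner contributions at $\tau_1$ and $\tau_2$. You propose to bound the two corner arcs together with the continuous remainder by something strictly less than $2\pi$ and then invoke integrality; to do this you bring in the value $-\pi/8$ from (\ref{del5-8}), the alternating signs along the phase-zero lines, and even the auxiliary claim that neighbouring lines hit one zero and one pole. The paper avoids all of this. Writing $\phi_5(t_u)$ for the phase of $\Delta_5$ on the critical line just below the upper endpoint and $\phi_5(t_l)$ for the phase just above the lower endpoint, the corner at $P_u$ contributes $\phi_5(t_u)-0$, the corner at $P_l$ contributes $0-\phi_5(t_l)$, and the accumulated continuous phase change between consecutive zeros and poles along the segment contributes $\phi_5(t_l)-\phi_5(t_u)$. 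These three terms cancel \emph{algebraically}, leaving only the jump contributions $-\pi\sum z_n+\pi\sum p_n=0$. No estimate, no case analysis on whether the endpoints are zeros or poles, and no appeal to the specific value $-\pi/8$ are needed. Your route would work, but the paper's cancellation is the simpler way through the corner bookkeeping.
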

\begin{proof}
We follow the argument in Theorem 4.5 of McPhedran et al (2013). We consider  a contour composed of two lines of phase zero, the segment between them on the critical line, and a segment between them
on the interval $\sigma>>1$. The total phase change  around this contour is strictly zero, since the region $\sigma>>1$ has the phase of $\Delta_5(\sigma+i t)$ constrained to be close to zero. Let $P_u=(1/2,t_u)$ be the point at the upper end of the segment on the critical line, and $P_l=(1/2,t_l)$ be the point at the lower end.

The total phase change between a point approaching $P_u$ on the contour from the right and a point leaving $P_l$ going right is, by construction, zero. This phase change is made up of contributions from the changes of phase at the zero or pole $P_u$, from the zero or pole $P_l$, from the $N_z$ zeros and $N_p$ poles on the critical line between $P_u$ and $P_l$, and from the phase change between the zeros and poles. In this list, the first phase change is $\phi_5(t_u)$, the phase of $\Delta_{5}(s)$ on the critical line just below $t_u$. The second change is $-\phi_5(t_l)$, where $\phi_5(t_l)$ is the phase of 
$\Delta_5(s)$  just above $t_l$. Giving zero $n$ a multiplicity $z_n$, and pole $n$ a multiplicity $p_n$, the phase change at the former is $-\pi z_n$ and at the latter $\pi p_n$. The phase change between zeros and poles is $\phi_5(t_l)-\phi_5(t_u)$. Hence the phase change is
\begin{equation}
\phi_5(t_u)-\phi_5(t_l)-\pi \left(\sum_{n=1}^{N_z}z_n-\sum_{n=1}^{N_p}p_n\right)+\phi_5(t_l)-\phi_5(t_u)=0,
\label{nthm5-4a}
\end{equation}
leading to
\begin{equation}
\sum_{n=1}^{N_z}z_n=\sum_{n=1}^{N_p}p_n ,
\label{nthm5-4b}
\end{equation}
as asserted.

\end{proof}
\begin{corollary}
Assuming the Riemann hypothesis holds for $\zeta(s)$ or $L_{-4}(s)$, if all zeros and poles on the critical line have multiplicity one, the numbers of zeros and poles on the critical line between any pair of lines of phase zero of $\Delta_5(s)$ coming from $\sigma=\infty$ are the same. The distribution functions for zeros of ${\cal C}(0,1;s)$ and $\zeta (2s -1/2)$
on the critical line
must then agree in all terms which   do not go to zero as $t\rightarrow \infty$.
\label{corol5-3}
\end{corollary}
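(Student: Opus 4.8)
The first assertion is an immediate consequence of Theorem \ref{thm5-4}: that theorem gives $\sum_{n=1}^{N_z} z_n = \sum_{n=1}^{N_p} p_n$ between any pair of lines of phase zero reaching $\sigma=\infty$, and if every zero and pole of $\Delta_5(s)$ on the critical line is simple then $z_n\equiv p_n\equiv 1$, so $N_z=N_p$. The work is therefore in the second assertion. I would begin by converting the count of zeros and poles of $\Delta_5(s)$ into a count of zeros of ${\cal C}(0,1;s)$ and of $\zeta(2s-1/2)$. Since $\Delta_5(s)={\cal C}(0,1;s)/\zeta(2s-1/2)$, and since (invoking Theorem \ref{thm5-3} under the standing hypothesis) the Riemann hypothesis then holds for $\zeta(s)$ and $L_{-4}(s)$, hence for ${\cal C}(0,1;s)$, and for $\zeta(2s-1/2)$, every relevant zero of numerator and denominator lies on $\sigma=1/2$. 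On that line a zero of $\Delta_5(s)$ is a zero of ${\cal C}(0,1;s)$ not matched by a zero of $\zeta(2s-1/2)$, and a pole is a zero of $\zeta(2s-1/2)$ not matched by a zero of ${\cal C}(0,1;s)$; a coincident pair contributes equally to each tally and drops out, and under the multiplicity-one hypothesis such a coincidence is in any case clean. Using that the lines of phase zero reaching $\sigma=\infty$ occur at the asymptotically equally spaced heights $t_n\simeq n\pi/\ln 2$ (equation (\ref{del5-10}) in the proof of Theorem \ref{thm5-2}), the first assertion then says that in every window $(t_n,t_{n+1})$ the number of zeros of ${\cal C}(0,1;s)$ on the critical line equals the number of zeros of $\zeta(2s-1/2)$ there.

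Writing $N_{{\cal C}}(T)$ for the number of zeros of ${\cal C}(0,1;s)$ on $\sigma=1/2$ with ordinate in $(0,T)$ and $N_{\zeta(2s-1/2)}(T)=N_{\zeta}(2T)$ for the corresponding count for $\zeta(2s-1/2)$ (since a zero of $\zeta$ at $\frac12+i\gamma$ produces a zero of $\zeta(2s-1/2)$ at $\frac12+i\gamma/2$), the window equalities telescope: at the heights $t_n$ the difference $N_{{\cal C}}(t_n)-N_{\zeta}(2t_n)$ stays bounded, the only residual contributions being the single boundary zero or pole of $\Delta_5(s)$ attached to each $t_n$. Hence $N_{{\cal C}}(T)$ and $N_{\zeta}(2T)$ increase by equal amounts at the resolution $\pi/\ln 2$, so their smooth (Riemann--von Mangoldt) parts, and hence their derivatives, the zero densities, must coincide in every term that does not tend to zero with $T$. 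This can be cross-checked directly from the functional equations (\ref{del5-3}): the smooth part of $N_{{\cal C}}(T)$ is governed by the gamma factor $\pi^{-s}\Gamma(s)$ and that of $N_{\zeta}(2T)$ by $\pi^{-(s-1/4)}\Gamma(s-1/4)$, and taking the phase of these factors on $\sigma=1/2$ by Stirling's formula shows the leading behaviour $\frac1\pi\ln(T/\pi)$ of the two densities agrees, the remaining discrepancies being of order $1/T$ and smaller.

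The main obstacle is the passage from the exact window-by-window matching to the asymptotic statement. One must (i) keep honest track of the boundary zeros and poles of $\Delta_5(s)$ sitting on the lines $t_n$, so that the telescoped difference is genuinely bounded rather than merely $O(1)$ in a loose sense; (ii) control the interplay between this bounded difference and the individually unbounded phase fluctuations $S(T)$ of $\zeta(s)$, $L_{-4}(s)$ and $\zeta(2s-1/2)$, which is what forces the agreement to reach down to the constant term of the smooth part (equivalently, in the density formulation, makes the $1/T$ errors the first point of possible divergence); and (iii) dispose of the subtlety of zeros of ${\cal C}(0,1;s)$ coinciding with zeros of $\zeta(2s-1/2)$ on the critical line, for which the multiplicity-one hypothesis is invoked.
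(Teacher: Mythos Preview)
Your approach is essentially the same as the paper's: both derive the first assertion directly from Theorem~\ref{thm5-4} under the simplicity hypothesis, and both obtain the second by telescoping the window-by-window equality between successive zero-phase lines. The paper is terser: it observes that there are no zeros or poles on the critical line below the first off-axis line of phase zero, so the telescoped difference is not merely bounded but exactly zero at each intersection height, giving exact agreement of the two counting functions on an infinite discrete set (and in neighbourhoods of each such point); this immediately precludes any discrepancy in non-vanishing asymptotic terms. Your more cautious ``bounded'' formulation, your explicit handling of boundary zeros/poles and coincident zeros, and your Stirling cross-check are refinements the paper does not spell out, but the underlying argument is the same.
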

\begin{proof}
The first assertion is a simple consequence of theorem \ref{thm5-4}. The second follows from
theorems \ref{thm5-2} and \ref{thm5-4}: the number of zeros and poles between successive zero lines coming from $\sigma=\infty$ match for all such pairs of lines. There are no zeros or poles on the critical line before the first line of phase zero off the real axis coming from $\sigma=\infty$,
so the distribution functions of zeros of the numerator and denominator of $\Delta_5(s)$
must then agree exactly at  the infinite set of values of $t$ corresponding to the intersection points above this first line, and in neighbourhoods including each member of the set.  This precludes
those distribution functions differing by terms which do not go to zero as $t\rightarrow \infty$.
\end{proof}

From Titchmarsh and Heath-Brown (1987), given the Riemann hypothesis holds for $\zeta(s)$,
the distribution function for its zeros on the critical line is
\begin{equation}
N_\zeta(\frac{1}{2},t)=\frac{t}{2\pi} \log (t)-\frac{t}{2\pi}(1+\log(2\pi))+I_\zeta(\frac{1}{2},t)+F_\zeta(\frac{1}{2},t),
\label{del5-12}
\end{equation}
where the functions $I$ and $F$ denote the sums of all terms which go to infinity as $t\rightarrow \infty$ or which remain finite, respectively. Hence, we have
\begin{equation}
N_\zeta(\frac{1}{2},2 t)=\frac{t}{\pi} \log (t)-\frac{t}{\pi}(1+\log(\pi))+I_\zeta(\frac{1}{2},2 t)+F_\zeta(\frac{1}{2},2 t),
\label{del5-13}
\end{equation}
From the result of corollary \ref{corol5-3}, we have for the distribution function of the zeros of $L_{-4}(s)=\beta(s)$ on the critical line that
\begin{equation}
N_{\beta}(\frac{1}{2},t)=\frac{t}{2\pi} \log (t)-\frac{t}{2\pi}(1+\log(\pi/2))+I_{\beta}(\frac{1}{2},t)+F_{\beta}(\frac{1}{2},t),
\label{del5-14}
\end{equation}
where 
\begin{equation}
I_\zeta(\frac{1}{2},2 t)=I_\zeta(\frac{1}{2}, t)+I_{\beta}(\frac{1}{2},t).
\label{del5-15}
\end{equation}
Thus, the distribution function for the zeros of $\zeta(s)$ determines that for $L_{-4}(s)$, in all important terms.

\section{Comments on Higher Order Dirichlet $L$ Functions}
Suppose we are interested in investigating the Riemann hypothesis for other Dirichlet $L$ functions $L_{-q}(s)$. From McPhedran {\em et al} (2008) we have for the distribution function for zeros on the critical line of  $L_{-q}(s)$,
\begin{equation}
N_{L-q}(\frac{1}{2},t)=\frac{t}{2\pi} \log t-\frac{t}{2\pi} (1+\log (2\pi/q))+\ldots .
\label{del5-16}
\end{equation}

We will want to construct a ratio of functions whose numerator and denominator have the same distribution functions of zeros on the critical line, as far as the terms given in equations (\ref{del5-12},\ref{del5-13},\ref{del5-16}) are concerned. For the case $q=3$, the product
$\zeta(s) L_{-3}(s)$ will have a lower density of  zeros on the critical line than
$\zeta(2 s-1/2)$, and thus extra zeros need to be inserted  in the numerator using an exponential factor. We define
\begin{equation}
\Delta_{-3}(s)=\frac{[1-(\frac{3}{4})^{s-1/2}] \zeta(s) L_{-3}(s)}{\zeta(2 s-1/2)}.
\label{del5-17}
\end{equation}
The term in square brackets has been chosen so that it tends to unity as $\sigma\rightarrow \infty$.
Its lines of zero phase there are given by $t=m\pi/log(4/3)$ for positive integers $m$.

For $q>4$, the extra zeros are required in the denominator. We define
\begin{equation}
\Delta_{-q}(s)=\frac{ \zeta(s) L_{-q}(s)}{[1-(\frac{4}{q})^{s-1/2}]\zeta(2 s-1/2)}.
\label{del5-18}
\end{equation}
The lines of zero phase for large positive $\sigma$ are given by $t=m\pi/log(q/4)$ for positive integers $m$.

These suggestions are of course extrapolations based on the case $q=4$, and need to be investigated numerically and analytically.

The work of R.C. McPhedran on this project has been supported by the Australian Research Council's Discovery Grants Scheme. He also acknowledges the financial support of the European Community's Seventh Framework
Programme under contract number PIAPP-GA-284544-PARM-2. M

\end{document}